\def\MODE{2}
\newcommand{\1}{\mathbf{1}}
\newcommand{\Pp}{\mathcal{P}}
\newcommand{\U}{\mathcal{U}}
\newcommand{\V}{\mathcal{V}}
\newcommand{\K}{\mathcal{K}}
\newcommand{\W}{\mathcal{W}}	
\newcommand{\G}{\mathcal{G}}
\newcommand{\Q}{\mathcal{Q}}
\newcommand{\squeezemat}[1]{\addtolength{\arraycolsep}{-#1}}
\newcommand\blfootnote[1]{%
	\begingroup
	\renewcommand\thefootnote{}\footnote{#1}%
	\addtocounter{footnote}{-1}%
	\endgroup
}
\begin{document}

\title{Agent-level optimal LQG control of dynamically decoupled systems with processing delays}

\if\MODE1
\author{Mruganka Kashyap \and Laurent Lessard}
\else
\author{Mruganka Kashyap \and Laurent Lessard}\fi

\note{Submitted to CDC 2019}
\maketitle



\begin{abstract}
We consider the problem of controlling a set of dynamically decoupled plants where the plants' subcontrollers communicate with each other according to a fixed and known network topology. We assume the communication to be instantaneous but there is a fixed processing delay associated with incoming transmissions. We provide explicit closed-form expressions for the optimal decentralized controller under these communication constraints and using standard LQG assumptions for the plants and cost function. Although this problem is convex, it is challenging due to the irrationality of continuous-time delays and the decentralized information-sharing pattern. We show that the optimal subcontrollers each have an observer--regulator architecture containing LTI and FIR blocks and we characterize the signals that subcontrollers should transmit to each other across the network.
\end{abstract}

\if\MODE1\else

\blfootnote{M.~Kashyap and L.~Lessard were both with the University of Wisconsin--Madison, Madison, WI~53706, USA at the time of initial submission. M.~Kashyap is now with the Department of Electrical and Computer Engineering, and L.~Lessard is now with the Department of Mechanical and Industrial Engineering, both at Northeastern University, Boston, MA~02115, USA. \texttt{\{kashyap.mru,l.lessard\}@northeastern.edu}\\[1mm]
	This material is based upon work supported by the National Science Foundation under Grant No.~1710892.}
\fi


\vspace{-5mm}

\section{Introduction}\label{sec:intro}

When transmitting information across a network, latency can be caused either by \emph{propagation delays}, which are due to the transmission medium and are proportional to the distance the signal must travel, or by \emph{processing delays}, which are due to encoding, decoding, buffering, filtering, or other signal processing that must happen on either end of the transmission.

In scenarios where distances are relatively short, such as swarms of unmanned aerial vehicles (UAVs) communicating over a wireless network, propagation delays are negligible and it is reasonable to assume that latency is due entirely to processing delays. With UAVs, processing delays are often fixed and known, since they are dictated by the hardware capabilities of the UAVs and the bandwidth of the communication channel.

We consider the problem of controlling a set of $N$ \emph{dynamically decoupled} plants, which we refer to as \emph{agents}. A four-agent example is depicted in Fig.~\ref{fig:cartoon}. 

\begin{figure}[ht]
	\centering
	\includegraphics{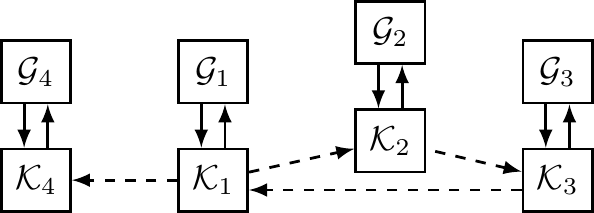}
    \caption{Example of a dynamically decoupled system. The plants $\G_i$ have corresponding controllers $\K_i$ that share information along edges of a directed graph. Implied links are not shown, e.g., $\K_2\to\K_1$ and $\K_3\to\K_4$.}
    \vspace{-1mm}
	\label{fig:cartoon}
\end{figure}

We make standard linear-quadratic-Gaussian (LQG) assumptions. That is, (1) the $\G_i$ and $\K_i$ are assumed to be continuous-time linear time-invariant (LTI) systems, (2) the exogenous noise is assumed to be Gaussian and uncorrelated between agents, and (3) the objective function is quadratic in the state and inputs.
Let $x_i$ denote the state of the agent $i$, and aggregate the agents' states into a global state vector $x$. Proceed similarly for the inputs $u$, measurements $y$, and exogenous noise $w$. We can represent the global dynamics compactly as in Fig.~\ref{fig:centralized_LQG}.
 
\begin{figure}[ht]
	\centering
	\includegraphics{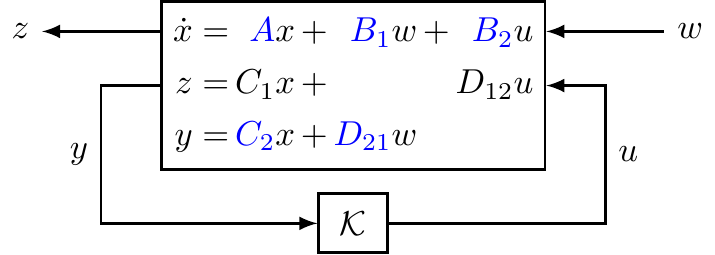}
	\caption{Classical four-block LQG problem with state-space representation of the plant. The goal is to design $\K$ to minimize the $\Htwo$ norm of the map from $w$ to $z$. Blue matrices a have block-diagonal structure due to the dynamically decoupled assumptions.}
	\label{fig:centralized_LQG}
\end{figure}

Due to our dynamically decoupled assumption, the aggregated plant matrices $A$, $B_1$, $B_2$, $C_2$, $D_{21}$ are block-diagonal. We allow the quadratic cost to couple the states and inputs of various agents, so the cost matrices $C_1$ and $D_{12}$ have no special structure.

We assume inter-controller communication happens instantaneously across the directed graph that represents the communication network (dashed lines in Fig.~\ref{fig:cartoon}), but there is a \emph{processing delay} of $\tau$ seconds for all incoming transmissions. We define the connectivity matrix $S \in \{0,1\}^{N\times N}$ as $S_{ij}=1$ if there is a directed path from controller $j$ to controller $i$ and $S_{ij}=0$ otherwise. The processing delays impose block structural constraints on $\K$. Specifically, $\K_{ij}$ is: zero if $S_{ij}=0$ (no communication), delayed by $\tau$ if $S_{ij}=1$ and $i\ne j$ (processing delay), and unconstrained if $S_{ij}=1$ and $i=j$ (no delay). For example, the connectivity and controller structures for the example of Fig.~\ref{fig:cartoon} are
\[
S\!=\!\squeezemat{1mm}\bmat{1&1&1&0\\ 1&1&1&0\\ 1&1&1&0\\ 1&1&1&1}\!,\,\,
\K \!\sim\! \left[\begin{array}{cccc}\K_{11}&e^{-s\tau}\K_{12}&e^{-s\tau}\K_{13}&0\\ e^{-s\tau}\K_{21}&\K_{22}&e^{-s\tau}\K_{23}&0\\ e^{-s\tau}\K_{31}&e^{-s\tau}\K_{32}&\K_{33}&0\\ e^{-s\tau}\K_{41}&e^{-s\tau}\K_{42}&e^{-s\tau}\K_{43}&\K_{44}\end{array}\right]\!.
\]
In this example, the controller $\K_3$ receives (delayed) measurements from $\G_1$ and $\G_2$, instantaneous measurement from $\G_3$, and no information from $\G_4$. In other words,
\[
u_3(t) = f_3\bigl( y_1(0:t-\tau), y_2(0:t-\tau), y_3(0:t) \bigr)
\]
for some function $f_3$. In the remainder of the paper, we derive the optimal structured $\K$ as a function of the connectivity $S$, delay $\tau$, and state-space matrices of the four-block plant (Fig.~\ref{fig:centralized_LQG}). We also provide a detailed and intuitive \emph{agent-level} description of the optimal $\K_i$.

\paragraph{Summary of relevant literature.}

The sparsity and delay constraints on our plant and controller are an example of a \textit{quadratically invariant} (QI) system~\cite{rotkowitz2005characterization,rotkowitz2010convexity}. Therefore, the problem of finding the optimal structured linear controller amounts to solving a convex (but infinite-dimensional) optimization problem.

When the structure does not contain delays, the exact optimal controller may be computed using vectorization~\cite{rotkowitz_vectorization,vamsi_elia}. In general, vectorization does not produce a minimal realization, nor does it elucidate structure, such as a controller-estimator separation or an interpretation of the signals communicated between subcontrollers.
Explicit solutions to special cases of output-feedback have been reported, such as: triangular~\cite{lessard2015optimal,tanaka2014optimal}, broadcast~\cite{lessard2012decentralized}, or dynamically decoupled~\cite{kim2015explicit,kashyap2019explicit} cases.

When delays are present, we distinguish the cases of discrete vs.\ continuous-time delay. In discrete time, the delay operator $z^{-1}$ is rational, so delays may be absorbed into the plant and the problem reduces to the non-delayed case~\cite{spdel}. In continuous time, this reduction is not possible because the delay operator $e^{-s\tau}$ is now irrational. One approach is to use a Pad\'{e} approximation for the delay~\cite{yan1996teleoperation} followed by vectorization. Alternatively, a Galerkin-style finite-dimensional approximation~\cite{scherer02,voulgaris_stabilization} can be used.

The present work seeks an explicit solution for the delayed case that provides structure and intuition without resorting to approximations or vectorization. 
Some special cases have been addressed in the literature. For centralized problems with a fixed loop delay (\emph{dead time}), a loop-shifting technique involving FIR blocks can transform the problem into an equivalent LQG problem with a finite-dimensional LTI plant~\cite{mirkin2003every,mirkin2003extraction}. A similar idea was used in the discrete-time case to decompose the structure into dead time and FIR components, which can be optimized separately~\cite{lampHtwoDelay}.

The aforementioned loop-shifting approach was also extended to the \emph{adobe delay} case, where the feedback loop contains both a delayed and a non-delayed path~\cite{mirkin2011dead,mirkin2012h2}. This approach was used to obtain explicit optimal controllers for bilateral teleoperation problems, where the controllers communicate across a delayed channel~\cite{kristalny2012decentralized,cho2012h2}. The idea was also generalized to haptic interfaces that have two-way communication with a shared virtual environment~\cite{kristalny2013decentralized}. These are special cases of the general problem we will treat, described in Section~\ref{sec:intro}, where $S$ is the all-ones matrix.

\paragraph{Overview.}

Our work generalizes the agent-level solution for the non-delayed case reported in~\cite{kashyap2019explicit} by leveraging the loop-shifting decomposition for adobe delays from~\cite{mirkin2012h2}. The paper is organized as follows.
Section~\ref{sec:notation} covers our notation and assumptions.
Section~\ref{sec:result} presents our solution to the general problem as a function of the connectivity matrix $S$ and processing delay $\tau$. Our results are presented in continuous time with an infinite-horizon cost but can easily be generalized to discrete time and/or a finite-horizon cost. We also characterize the signals that should be transmitted between subcontrollers and show that each subcontroller has an intuitive observer-regulator structure.
Finally, Section~\ref{sec:diss} shows that we recover the expected limiting cases in the limits of zero or infinite processing delay.

\section{Preliminaries}\label{sec:notation}

\paragraph{Notation.} 
State-space notation for transfer functions:
\begin{align*}
	\G(s) = \left[\begin{array}{c|c}%
		A&B\\ \hlinet C&D\end{array}\right]= D+C(sI-A)^{-1}B.
\end{align*}
We let $N$ denote the total number of agents and $[N] \defeq \{1,\dots,N\}$. The $i^\text{th}$ subsystem has state dimension $n_i$, input dimension $m_i$, and measurement dimension $p_i$. The global state dimension is $n \defeq n_1+\cdots+n_N$ and similarly for $m$ and $p$. The identity matrix of size $k$ is denoted $I_k$. We write $\blkdiag(\{X_i\})$ to denote the block-diagonal matrix formed by the blocks $\{X_1,\dots,X_n\}$ and $\diag(X)$ to denote the block-diagonal matrix formed by the diagonal blocks of $X$.
The zeros used throughout are matrix or vector zeros and their sizes are determined from context.
The symbol $\otimes$ denotes the Kronecker product.

For $i\in [N]$, we write $\underline{i} \subseteq [N]$ to denote the \textit{descendants} of node $i$, i.e., the set of nodes $j$ such that there is a directed path from $i$ to $j$. Likewise, $\bar{i} \subseteq [N]$ denotes the \textit{ancestors} of node $i$. Similarly, $\bar{\bar{i}}$ and $\underline{\underline{i}}$ denote the \textit{strict ancestors} and \textit{strict descendants}, respectively. We also write $s^c = [N]\setminus s$ to denote the complement.
For example, in the graph of Fig.~\ref{fig:cartoon}, we have $\underline{2} = \{1,2,3,4\}$, $\underline{\underline{2}} = \{1,3,4\}$, and $\bar{\bar{2}} = \{1,3\}$.
We also use this notation to index matrices.
For example, if $X$ is a $4\times 4$ block matrix associated with Fig.~\ref{fig:cartoon}, then
$X_{1\bar{\bar{2}}} = \bmat{X_{11} & X_{13}}$.

We will require the use of specific partitions of the identity matrix. We define $I_{n} \defeq \blkdiag(\{I_{n_i}\})$ and for each agent $i \in [N]$, we define $E_{n_i} \defeq (I_{n})_{:i}$ (the $i^\textup{th}$ block column of $I_{n}$). Similar to the descendant and ancestor definitions, $n_{\underline{i}}\defeq\sum_{k\in \underline{i}}n_k$ and $n_{\bar{i}}\defeq\sum_{k\in \bar{i}}n_k$. The dimensions of $E_{{n_{\bar{i}}}}$ and $E_{{n_{\underline{i}}}}$ are determined by the context of use. Finally, $1_{n}$ is the $n\times 1$ matrix of $1$'s.

\paragraph{Processing delay notation~\cite{mirkin2012h2}.} 
The \textit{adobe delay} matrix $\Lambda_m^i \defeq \blkdiag(I_{m_i},e^{-s\tau}I_{m_{\underline{\underline{i}}}})$ leaves block $i$ unchanged and imposes a delay of $\tau$ on all strict descendants of $i$.
The completion operator $\pi_\tau\{\cdot\}$ acts on a state-space system delayed by $\tau$ and returns the FIR system with support on $[0,\tau]$ that completes it:
\[
\squeezemat{2pt}
\pi_\tau\Biggl\{\left[\begin{array}{c|c}A&B\\ \hlinet C&D\end{array}\right]\!e^{-s\tau}\Biggr\}
\defeq\left[\begin{array}{c|c}A&e^{-A\tau}\!B\\ \hlinet C&D\end{array}\right]-\left[\begin{array}{c|c}A&B\\ \hlinet C&D\end{array}\right]e^{-s\tau}\!.
\]
We also define the function $\Gamma : (\Omega,\Lambda_m^i) \mapsto (\tilde{\Omega},\Pi_u,\Pi_b)$, which maps a four-block plant $\Omega$ defined as
\begin{equation}\label{eq:fourblock}
	\Omega \defeq
	\bmat{ \U & \V \\ \W & \G } \defeq
	\left[\begin{array}{c|cc}
		A & B_1 & B_2\\ \hlinet
		C_1 & 0 & D_{12} \\
		C_2 & D_{21} & 0 \end{array}\right]
\end{equation}
and adobe delay matrix $\Lambda_m^i$ to a modified plant $\tilde{\Omega}$ and FIR systems $\Pi_u$ and $\Pi_b$. See Appendix~\ref{sec:appendix_gamma} for details.

\paragraph{Problem statement.}
Given a four-block plant~\eqref{eq:fourblock} and associated connectivity matrix $S$ as described in Section~\ref{sec:intro} (refer to Fig.~\ref{fig:centralized_LQG}), the plants are dynamically decoupled but the cost function may couple the states and inputs of the different agents. Iit follows that $\W$ and $\G$ are block diagonal but $\U$ and $\V$ need not. Let $\mathbb{S}_\tau$ be the set of causal structured LTI controllers with processing delay $\tau$, as described in Section~\ref{sec:intro}.
The problem is to find $\K$ to 
\begin{equation}\label{opt}
\begin{aligned}
\underset{\K}{\minimize}
\qquad & \normm{\,\U + \V\K(I-\ \G\K)^{-1}\W\,}_2^2 \\
\subject \qquad & \K \in \mathbb{S}_{\tau}
\text{ and $\K$ stabilizes $\Omega$.}
\end{aligned}
\end{equation}
We now describe and explain the technical assumptions we make on the four-block plant $\Omega$ and connectivity $S$.

\paragraph{Riccati assumptions.} Four matrices $(A,B,C,D)$ are said to satisfy the \emph{Riccati assumptions}~\cite{kim2015explicit,mirkin2012h2} if:
\begin{enumerate}[label*=R\arabic*., ref=R\arabic*]
	\item $C^\tp D=0$ and $D^\tp D>0$. \label{harf}
	\item $(A,B)$ is stabilizable.
	\item $\begin{bmatrix}A-j\omega I&B\\C&D\end{bmatrix}$ is full column rank for all $\omega \in \R$.
\end{enumerate}
If the Riccati assumptions hold, there is a unique stabilizing solution to the associated algebraic Riccati equation, which we write as $(X,F)=\ric(A,B,C,D)$. Thus $X \succeq 0$ satisfies
$
A^\tp X+XA+C^\tp C -XB(D^\tp D)^{-1}B^\tp X=0
$ with $A+BF$ Hurwitz, where $F\defeq -(D^\tp D)^{-1}B^\tp X$.

\begin{assumption}[\textbf{System assumptions}]
	\label{Ass:System}
	For the $N$ interacting agents, we will assume the following.
	\begin{enumerate}[label*=\arabic{thm}.\arabic*., ref=\arabic{thm}.\arabic*]
		\item $A_{ii}$ is Hurwitz for all $i\in[N]$, i.e., $A$ is Hurwitz. \label{Ass:System_hurwitz}
		\item The Riccati assumptions hold for $(A,B_2,C_1,D_{12})$ and for $(A_{ii}^\tp ,C_{2_{ii}}^\tp ,B_{1_{ii}}^\tp ,D_{{21}_{ii}}^\tp )$ for all $i\in[N]$. \label{Ass:System_riccati}
		\item $D_{12_{:\underline{i}}}^\tp D_{12_{:\underline{i}}}=I$ and $D_{21_{ii}}D_{21_{ii}}^\tp=I$ for all $i\in[N]$. \label{Ass:System_D1221}
	\end{enumerate}
\end{assumption}
\noindent Assumption~\ref{Ass:System_hurwitz} is an assumption of nominal stability, carried over from Kim et~al.~\cite{kim2015explicit}. Assumption~\ref{Ass:System_riccati} ensures the necessary condition that the centralized LQR problem and the individual agents' estimation problems are non-degenerate. Assumption~\ref{Ass:System_D1221} simplifies the exposition of the delayed problem~\cite{mirkin2011dead,mirkin2012h2}, though the results still hold for the general case~\cite[Rem.~3.2]{mirkin2011dead}.

\section{Main Result}\label{sec:result}

Our main result is an agent-level description of the optimal controller that solves~\eqref{opt}. That is, we provide explicit state-space formulas for each $\K_i$ and describe which signals should be transmitted and received between agents.

\begin{thm}
	\label{thm:3}
	Consider a general instance of the structured optimal control problem described in Section~\ref{sec:intro} and suppose Assumption~\ref{Ass:System} holds. Consider the four-block sub-plant for agent $i$ and its descendants:
	\begin{align*}
			{\Omega}_i \defeq
			\left[\begin{array}{c c}
				{\U}_{:i} & {\V}_{:\underline{i}}\\
				{\W}_{ii} & {\G}_{i\underline{i}}\end{array}\right] \defeq
			\left[\begin{array}{c|cc}
				A_{\underline{ii}} & B_{1_{\underline{i}i}} & B_{2_{\underline{ii}}} \\[2pt] \hlinet
				{C}_{1_{:\underline{i}}} & 0 & D_{12_{:\underline{i}}} \\
				C_{2_{i\underline{i}}} & D_{21_{ii}} & 0\end{array}\right].
	\end{align*}
	Now apply
	$(\tilde{\Omega}_i,\Pi_{u_i},\Pi_{b_i})=\Gamma(\Omega_i,\Lambda_m^i)$. Define the estimation gain $\tilde{F}^i$ and control gain $L^i$ by
	\begin{subequations}\label{eq:Riccati}
		\begin{align}
		(\tilde{X}^i,\tilde{F}^i) &\defeq \ric\bigl(A_{\underline{ii}},\tilde{B}_{2_{\underline{ii}}},\tilde{C}_{1_{:\underline{i}}},D_{12_{:\underline{i}}}\bigr),\\
		(Y^i,{L^i}^\tp) &\defeq
		\ric(A_{ii}^\tp,C_{2_{ii}}^\tp,B_{1_{ii}}^\tp,D_{{21}_{ii}}^\tp).
		\end{align}
    \end{subequations}
	A realization of the optimal controller $\K_i$ for agent $i$ that solves~\eqref{opt} is given by the state-space equations
\begin{subequations}\label{eq:state_agent_Ktau_reduced}
    \begin{align}
        \dot \eta_{i,\underline{i}} &= (A_{\underline{ii}}+(LC)^i)\eta_{i,\underline{i}} + (LC)^i \sum_{k\in \bar{\bar i}}\eta_{k,\underline{i}}(t-\tau) \notag\\
        &  \qquad\qquad + \tilde{B}_{2_{\underline{ii}}} \tilde{\nu}_{i,\underline{i}}- E_{n_{\underline{i}}}^{\tp} E_{n_i}L^i (y_i+\Pi_{b_i} \nu_{i,\underline{i}}),\\
        \tilde{\nu}_{i,\underline{i}} &= \tilde{F}^i \eta_{i,\underline{i}}, \\
        u_i &= E_{m_i}^\tp E_{m_{\underline{i}}}\biggl( \nu_{i,\underline{i}}+ \sum_{k\in\bar{\bar{i}}}\tilde{\nu}_{k,\underline{i}}(t-\tau) \biggr),
    \end{align}
\end{subequations} 
where $(LC)^i\defeq E_{n_{\underline{i}}}^{\tp} E_{n_i} L^i C_{2_{ii}} E_{n_{i}}^{\tp} E_{n_{\underline{i}}}$ and $\nu_{i,\underline{i}} \defeq \Pi_{u_i} \tilde{\nu}_{i,\underline{i}}$.
\end{thm}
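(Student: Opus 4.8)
The plan is to combine the explicit agent-level solution for the non-delayed dynamically decoupled problem in \cite{kashyap2019explicit} with the adobe-delay loop-shifting decomposition of \cite{mirkin2012h2}. Since the delay-plus-sparsity constraint set $\mathbb{S}_\tau$ is quadratically invariant with respect to the block-diagonal plant $\G$ \cite{rotkowitz2005characterization}, problem~\eqref{opt} is convex and admits a Youla-type reparametrization into an equivalent model-matching problem. The first substantive step is to show that this global problem decomposes into $N$ independent agent-level subproblems, one per node, where the subproblem for agent $i$ is posed on the subplant $\Omega_i$ formed by $i$ together with its descendants $\underline{i}$. The reason $\underline{i}$ appears is informational: agent $i$'s action enters the cost coupling with its descendants, and agent $i$ must anticipate what those descendants will later reconstruct from the signals it transmits to them after the delay $\tau$.

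Within each subproblem the continuous-time delay is handled by the completion machinery. Applying $(\tilde\Omega_i,\Pi_{u_i},\Pi_{b_i})=\Gamma(\Omega_i,\Lambda_m^i)$, the adobe-delay structure of $\Lambda_m^i$---instantaneous on block $i$, delayed on the strict descendants $\underline{\underline{i}}$---is split into a finite-dimensional LTI plant $\tilde\Omega_i$ plus the FIR blocks $\Pi_{u_i}$ and $\Pi_{b_i}$ supported on $[0,\tau]$. These FIR blocks absorb the irrational factor $e^{-s\tau}$, so the residual optimization over $\tilde\Omega_i$ is a standard finite-dimensional $\Htwo$ problem. I would then invoke the separation principle on $\tilde\Omega_i$: the regulator Riccati equation yields the control gain $\tilde F^i$ and the estimation Riccati equation on agent $i$'s own dynamics yields $L^i$, exactly as in~\eqref{eq:Riccati}.

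With $\tilde F^i$ and $L^i$ in hand, I would assemble the optimal $\K_i$ by placing the estimator $\eta_{i,\underline{i}}$ and the regulated signal $\tilde\nu_{i,\underline{i}}=\tilde F^i\eta_{i,\underline{i}}$ in observer--regulator cascade, inserting $\Pi_{u_i}$ and $\Pi_{b_i}$ to complete the delayed paths, and then verifying term by term that the resulting realization matches~\eqref{eq:state_agent_Ktau_reduced}. This verification simultaneously identifies the inter-agent signals: each agent transmits its estimator state and its regulated input, which arrive at each strict descendant after delay $\tau$ and appear as the delayed terms $\eta_{k,\underline{i}}(t-\tau)$ and $\tilde\nu_{k,\underline{i}}(t-\tau)$ summed over the strict ancestors $k\in\bar{\bar i}$.

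The hardest part is the decomposition into agent-level adobe-delay subproblems. It requires reconciling two-way delayed information: agent $i$ sees its ancestors only after $\tau$ while its descendants see $i$ only after $\tau$, and one must prove that this structured information is exactly consistent with each subcontroller reconstructing the relevant blocks of the global estimator without ever violating the constraint $\K\in\mathbb{S}_\tau$. Fusing the decentralized-estimation argument of \cite{kashyap2019explicit} with the loop-shifting of \cite{mirkin2012h2}---in particular getting the FIR supports, the projection matrices $E_{n_{\underline i}}^\tp E_{n_i}$, and the delayed summations over $\bar{\bar i}$ to line up so that the transmitted signals are causal and minimal---is the delicate bookkeeping that the explicit formulas encode, and is where I expect the real work to lie.
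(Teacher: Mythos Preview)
Your proposal follows essentially the same route as the paper: quadratic invariance gives a Youla parametrization $\Q=\K(I-\G\K)^{-1}$; block-diagonality of $\W$ lets the model-matching cost split by block columns into $N$ independent subproblems, each carrying an adobe delay $\Lambda_m^i$; the loop-shifting transformation $\Gamma$ of \cite{mirkin2012h2} removes the delay from each subproblem; and the resulting delay-free model-matching problems are solved by the Riccati pair $(\tilde F^i,L^i)$.

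The one place your intuition diverges from the paper's mechanics is in locating the hard step. The column decomposition you flag as ``the hardest part'' is in fact immediate: since $\W$ is block-diagonal and $\Q$ inherits the sparsity of $\K$, one has $\|\U+\V\Q\W\|_2^2=\sum_i\|\U_{:i}+\V_{:\underline i}\,\Q_{\underline ii}\W_{ii}\|_2^2$, and each summand is a single-measurement adobe-delay problem (only $y_i$ enters, with inputs to $\underline{\underline i}$ delayed) --- there is no ``two-way delayed information'' to reconcile at this stage. The genuine work lies in the \emph{reassembly}: from the $N$ separate solutions $\tilde\Q_{\underline ii}$ one inverts the loop-shifting to obtain $\hat\Pp_{\underline ii}$, zero-pads and concatenates these into a global $\Pp$, and then recovers $\K$ via the nontrivial identity $\K=\Pp\bigl(I+\G(\Pp-\diag(\Pp))\bigr)^{-1}$ (cf.\ \cite[Lemma~9]{kim2015explicit}). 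Only after simplifying the resulting expression and splitting it agent by agent do the delayed ancestor sums $\sum_{k\in\bar{\bar i}}\eta_{k,\underline i}(t-\tau)$ and the projection matrices $E_{n_{\underline i}}^\tp E_{n_i}$ emerge; your ``verify term by term'' step is hiding precisely this derivation.
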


\begin{proof}
	See Appendix \ref{App:A}.
\end{proof}

We use a slight abuse of notation in~\eqref{eq:state_agent_Ktau_reduced}. These are time-domain equations but $\Pi_{b_i}$ and $\Pi_{u_i}$ are FIR transfer matrices. Products such as $\Pi_{b_i} \nu_{i,\underline{i}}$ are to be interpreted as the time-domain output of $\Pi_{b_i}(s)$ with input $\nu_{i,\underline{i}}(t)$.

Fig.~\ref{fig:Ki} shows the signals received and transmitted by each subcontroller. Fig.~\ref{fig:local} shows a more detailed block diagram. The optimal controller is an interconnection of state-space systems, FIR blocks, and delays.


\section{Limiting cases}\label{sec:diss}

\begin{figure}[ht]
	\centering
	\includegraphics{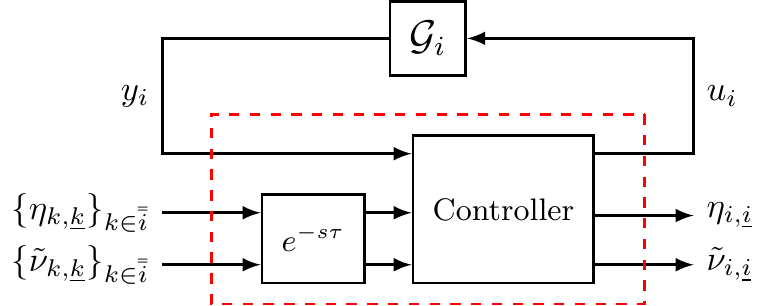}
	\caption{Block diagram representation of a local plant $\G_i$ and associated controller $\K_i$ (dashed red box). Controller $i$ receives local measurements $y_i$ and information $(\eta,\tilde\nu)$ from its strict ancestors $\bar{\bar{i}}$ (which incurs a processing delay $e^{-s\tau}$). The controller then computes  and transmits information to its strict descendants $\underline{\underline{i}}$.}
	\label{fig:Ki}
\end{figure}

\begin{figure*}[htb]
	\centering
	\includegraphics{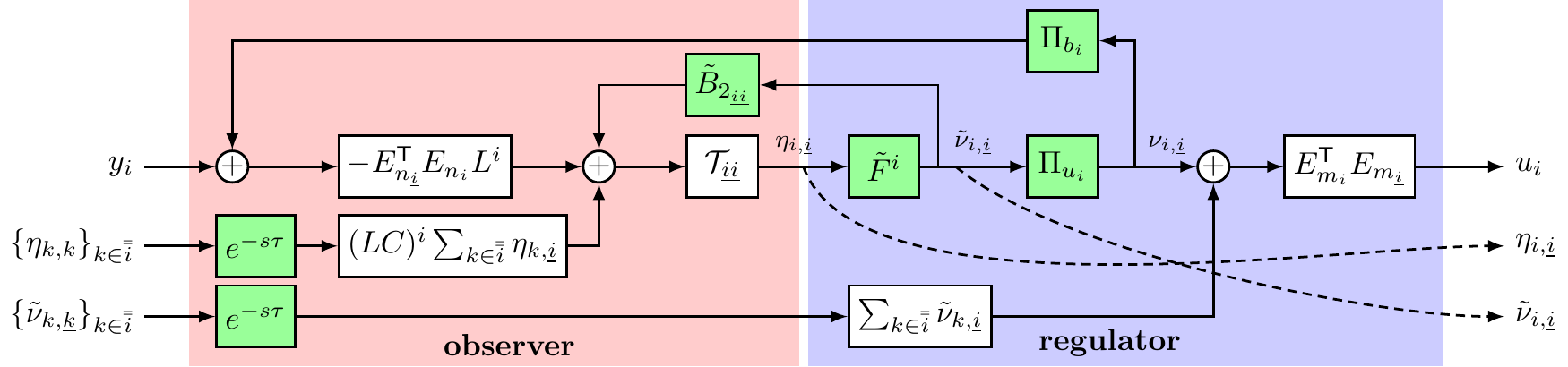}
	\caption{Expanded diagram of the optimal controller shown in Fig.~\ref{fig:Ki} that reveals the observer-regulator structure of $\K_i$ from Theorem~\ref{thm:3}. The green blocks represent blocks that depend on the processing delay $\tau$.  
		Agent $i$ updates its local innovation $\eta_{i,\underline i}$ and computes its local input $u_i$ using the local measurement $y_i$ and the delayed innovations $\eta_{k,\underline k}$ and partial inputs $\tilde v_{k,\underline k}$ from  strict ancestors $k\in \bar{\bar{i}}$.
		For simplicity, we used the following notation for the observer dynamics: $\mathcal{T}_{\underline{ii}} \defeq  (sI-A_{\underline{ii}}-(LC)^i)^{-1}$, where $(LC)^i\defeq E_{n_{\underline{i}}}^{\tp} E_{n_i} L^i C_{2_{ii}} E_{n_{i}}^{\tp} E_{n_{\underline{i}}}$.}
	\label{fig:local} 
\end{figure*}

We now study the behavior of our optimal controller as the processing delay $\tau$ varies. Setting $\tau=0$, nodes belonging to the same connected component can be treated as a single node due to the instantaneous communication assumption. For example, if $\tau=0$ in Fig.~\ref{fig:cartoon}, nodes $1$, $2$, and $3$ become a single node, and we are left with the equivalent two-node graph $S = \left(\begin{smallmatrix}1 & 0 \\ 1 & 1 \end{smallmatrix}\right)$. This special case was solved in~\cite{kashyap2019explicit} and we recover this result by setting $\tau=0$ in Theorem~\ref{thm:3}. At the other extreme, the case $\tau\to\infty$ leads to a controllers that do not communicate at all. These results are formalized in Corollary~\ref{corr:1} below.

\begin{cor}\label{corr:1}
Consider the setting of Theorem~\ref{thm:3}.
\begin{enumerate}[label*=\arabic*), ref=\arabic*]
    \item If there is no processing delay ($\tau=0$), The controller $\K_{0}$ that solves~\eqref{opt} is
    \begin{subequations}
		\begin{align*}
            \hspace{-2mm}\dot \eta_{i,\underline{i}} &= A_{\underline{ii}} \eta_{i,\underline{i}} + 
            (LC)^i\sum_{k \in \bar{i}}\eta_{k,\underline{i}} + {B}_{2_{\underline{ii}}} \nu_{i,\underline{i}} - E_{n_{\underline{i}}}^{\tp} E_{n_i}L^iy_i\\
            \hspace{-1mm}u_i &= E_{m_i}^\tp\sum_{j\in\bar{i}} F_j \eta_j,
        \end{align*}
    \end{subequations}
    where $\nu_{i,\underline{i}} \defeq {F}^i \eta_{i,\underline{i}}$ and $F_j\defeq E_{m_{\underline{j}}}F^{j}E_{n_{\underline{j}}}^\tp$ and the remaining parameters are defined in Theorem~\ref{thm:3}.
    
    \item If there is infinite processing delay ($\tau\to\infty$), the controller $\K_{\infty}$ that  solves~\eqref{opt} is
	\begin{subequations}\label{eq:Kinfty}
		\begin{align}
			\dot \eta_{i,i} &= A_{ii} \eta_{i,i} + L^iC_{2_{ii}}\eta_{i,i}+ B_{2_{ii}} u_i-L^iy_i\\
			u_i &= F_\infty^i\eta_{i,i},
		\end{align}
	\end{subequations}
	where 
    $(X_\infty^i,F_\infty^i) \defeq		\ric(A_{ii},B_{2_{ii}},C_{1_{:{i}}},D_{12_{:{i}}})$.
\end{enumerate}
\end{cor}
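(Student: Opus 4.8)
The plan is to obtain both limiting controllers by specializing the formulas of Theorem~\ref{thm:3} to $\tau=0$ and $\tau\to\infty$. The only objects in \eqref{eq:state_agent_Ktau_reduced}--\eqref{eq:Riccati} that depend on $\tau$ are: (i) the explicit delays $e^{-s\tau}$ that appear as the time shifts $(t-\tau)$; (ii) the FIR blocks $\Pi_{u_i}$ and $\Pi_{b_i}$, which enter through the completion operator $\pi_\tau$; (iii) the modified plant $\tilde\Omega_i$ produced by $\Gamma$; and hence (iv) the control gain $\tilde F^i$, through its dependence on $\tilde B_{2_{\underline{ii}}}$ and $\tilde C_{1_{:\underline i}}$. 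The estimation data $(Y^i,L^i)$ and the decoupled matrices $A_{ii}$, $B_{2_{ii}}$, $C_{2_{ii}}$ are $\tau$-free, so the limiting argument reduces entirely to evaluating (i)--(iv).

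For part~1, I would first show that $\pi_0\{\cdot\}=0$ directly from its definition: at $\tau=0$ we have $e^{-A\tau}=I$ and $e^{-s\tau}=1$, so the two state-space terms coincide and cancel. Consequently the $\Gamma$ map degenerates to $\tilde\Omega_i=\Omega_i$, $\Pi_{u_i}=I$, and $\Pi_{b_i}=0$, whence $\tilde B_{2_{\underline{ii}}}=B_{2_{\underline{ii}}}$, $\tilde C_{1_{:\underline i}}=C_{1_{:\underline i}}$, $\tilde F^i=F^i$, and $\nu_{i,\underline i}=\tilde\nu_{i,\underline i}=F^i\eta_{i,\underline i}$. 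Substituting into \eqref{eq:state_agent_Ktau_reduced} and dropping the shifts $(t-\tau)\mapsto t$, the term $(A_{\underline{ii}}+(LC)^i)\eta_{i,\underline i}$ together with $(LC)^i\sum_{k\in\bar{\bar i}}\eta_{k,\underline i}$ collapses into $A_{\underline{ii}}\eta_{i,\underline i}+(LC)^i\sum_{k\in\bar i}\eta_{k,\underline i}$, using $\bar i=\{i\}\cup\bar{\bar i}$; the $\Pi_{b_i}$ term vanishes, recovering $-E_{n_{\underline i}}^\tp E_{n_i}L^i y_i$. A final reindexing of the control law $u_i=E_{m_i}^\tp E_{m_{\underline i}}\sum_{k\in\bar i}\nu_{k,\underline i}$ into the aggregated form $u_i=E_{m_i}^\tp\sum_{j\in\bar i}F_j\eta_j$ with $F_j\defeq E_{m_{\underline j}}F^jE_{n_{\underline j}}^\tp$ (the $E$-matrices carrying out the embedding of the $\underline j$-block into the $\underline i$-block) then matches the non-delayed controller of~\cite{kashyap2019explicit} exactly.

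For part~2, I would argue that as $\tau\to\infty$ every delayed transmission becomes useless, so the communication terms indexed by $\bar{\bar i}$ drop out and each agent is left controlling only itself. Concretely, the adobe delay matrix $\Lambda_m^i=\blkdiag(I_{m_i},e^{-s\tau}I_{m_{\underline{\underline i}}})$ tends to $\blkdiag(I_{m_i},0)$, which projects the strict-descendant channels out of $\Omega_i$; the effective sub-plant collapses to the single-agent plant $(A_{ii},B_{2_{ii}},C_{1_{:i}},D_{12_{:i}})$, so the control Riccati data converge to $(X_\infty^i,F_\infty^i)=\ric(A_{ii},B_{2_{ii}},C_{1_{:i}},D_{12_{:i}})$. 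With the $\bar{\bar i}$ sums and the delayed/FIR corrections gone, \eqref{eq:state_agent_Ktau_reduced} reduces to the observer--regulator pair \eqref{eq:Kinfty}, i.e.\ a bank of $N$ independent LQG controllers with estimation gain $L^i$ and control gain $F_\infty^i$.

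I expect the $\tau\to\infty$ limit to be the main obstacle. The difficulty is that $e^{-s\tau}\to0$ cannot be taken as a naive pointwise limit of transfer functions, and the FIR building blocks contain factors such as $e^{-A\tau}B$ that do \emph{not} converge termwise (since $A$ is Hurwitz by Assumption~\ref{Ass:System_hurwitz}, $e^{-A\tau}$ grows). The clean way to justify the limit is therefore at the level of the closed-loop map from $w$ to $z$ rather than coefficient by coefficient: one shows that the contribution of every delayed path to the cost \eqref{opt} vanishes as $\tau\to\infty$, so that the limiting optimal controller is the one that ignores all inter-agent communication, which is exactly the decoupled LQG bank \eqref{eq:Kinfty}. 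The $\tau=0$ case, by contrast, is a routine substitution once $\pi_0\{\cdot\}=0$ is established.
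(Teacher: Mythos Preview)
Your treatment of part~1 is essentially identical to the paper's: set $\tau=0$, get $\pi_0\{\cdot\}=0$, hence $\Pi_{u_i}=I$, $\Pi_{b_i}=0$, $\tilde\Omega_i=\Omega_i$, $\tilde F^i=F^i$, and substitute.

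For part~2 your route diverges from the paper's. The paper stays inside the $\Gamma$ framework and asserts that as $\tau\to\infty$ one again has $\Pi_{u_i}\to I$ and $\Pi_{b_i}\to 0$, but now $\tilde B_{2_{\underline{ii}}}\to\diag(B_{2_{\underline{ii}}})$; with a block-diagonal $\tilde B_2$ the Riccati pair $(\tilde X^i,\tilde F^i)$ decouples, and after eliminating the uncontrollable ($\underline{\underline i}$) and unobservable ($\underline i^c$) modes one is left with $E_{m_i}F_\infty^i E_{n_i}^\tp$ and the single-agent observer--regulator~\eqref{eq:Kinfty}. You instead argue at the level of the closed-loop cost that every delayed path becomes useless, so the optimal strategy must be the bank of independent LQG controllers. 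Both arguments are sketches; the paper's buys a cleaner connection to Theorem~\ref{thm:3} (the same formulas, just with limiting data and a Kalman decomposition), while yours sidesteps the analysis of $\Sigma=e^{H\tau}$ at the price of re-deriving optimality from scratch.

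One technical remark: your stated obstacle, that ``$e^{-A\tau}B$ grows because $A$ is Hurwitz,'' misidentifies the relevant matrix. In the $\Gamma$ construction the completion operator is applied with the Hamiltonian $H$ in place of $A$, and $\Sigma=e^{H\tau}$ has a symplectic structure whose blocks $\Sigma_{ij}$ do not all blow up; it is precisely this structure that underlies the paper's claimed limits $\Pi_{u_i}\to I$, $\Pi_{b_i}\to 0$, $\tilde B_{2_\tau}\to 0$. So the limit \emph{can} be taken coefficient-wise in the modified-plant data, which is what the paper does; your closed-loop detour is a legitimate alternative but not forced by the obstruction you describe.
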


\begin{proof}
    When $\tau \to 0$, we have $\Pi_{u_i} = I$ and $\Pi_{b_i} =0$. Moreover, $\tilde\Omega_i = \Omega_i$ and $\tilde F^i = F^i$.
    When $\tau \to \infty$, we also have $\Pi_{u_i} \to I$ and $\Pi_{b_i} \to 0$ but this time, $\tilde B_{2_{\underline{ii}}} \to \diag(B_{2_{\underline{ii}}})$. The corresponding control gain $\tilde{F}^i$ reduces to an augmented nominal gain $E_{m_i}F_{\infty}^{i}E_{n_i}^\tp$ after elimination of the uncontrollable ($\underline{\underline{i}}$) and unobservable ($\underline{i}^c$) modes to obtain~(\ref{eq:Kinfty}). Letting $\tau\to\infty$ is equivalent to solving $N$ separate LQG problems (the global cost matrices are block-diagonal).
\end{proof}

The effect of changing the processing delay $\tau$ is illustrated in the block diagram of Fig.~\ref{fig:local}. When $\tau$ varies, only the green blocks are affected. The remaining parts of the controller can be designed without knowing $\tau$.

\section{Conclusion}
\label{sec:conclufuture}
In this work, we considered an optimal control problem where dynamically decoupled agents communicate over a network and incur processing delays to receive transmissions from neighboring controllers. We described the structure of the optimal control strategies at the level of individual agents (Fig.~\ref{fig:local}), which shows which signals should be transmitted and which parts of the controller depend on the processing delay.

\appendix

\section{Definition of the \texorpdfstring{$\Gamma$}{Gamma} function}\label{sec:appendix_gamma}

The input matrices $B_2$ and $D_{12}$ of $\Omega$ defined in~\eqref{eq:fourblock} are partitioned according to the blocks of $\Lambda_m^i$.
Therefore, ${B}_2=\bmat{ B_{2_0} & B_{2_{\tau}} }$, where the two blocks have delays of $0$ and $\tau$, respectively. We partition $D_{12}=\bmat{ D_{12_0} & D_{12_{\tau}} }$ similarly.
Now define the Hamiltonian
\[
\squeezemat{3pt}
H\!=\! \bmat{H_{11} & H_{12} \\ H_{21} & H_{22}} \!\defeq\!
\squeezemat{2pt} \bmat{ A - B_{2_0} D_{12_0}^\tp C_1 & -B_{2_0}B_{2_0}^\tp \\
-C_1^\tp P_\tau C_1 & -A^\tp 
+ C_1^\tp D_{12_0}  B_{2_0}^\tp}\!,
\]
where $P_0 \defeq D_{{12}_0}D_{{12}_0}^\tp$ and $P_{\tau} \defeq I-P_0$, 
and define its symplectic matrix exponential as $\Sigma \defeq e^{H\tau}$.
Define the modified matrices corresponding to $B_2$ and $C_1$ as
\begin{align*}
	\tilde{B}_2 &\defeq
	\left[\begin{array}{cc}
		B_{2_0} & \Sigma_{12}^\tp C_{1}^\tp D_{{12}_{\tau}}+\Sigma_{22}^\tp B_{2_{\tau}}\end{array}\right]\\
	\tilde{C}_1 &\defeq \left(P_{\tau} C_{1} + P_0 C_{1} \Sigma_{22}^\tp-D_{{12}_{0}} B_{2_0}^\tp \Sigma_{21}^\tp \right)\Sigma_{22}^{-\tp},	
\end{align*}
where the $\Sigma_{ij}$ are partitioned the same was as the $H_{ij}$. The modified four-block plant output by $\Gamma$ is then
\begin{equation}\label{eq:omegatilde}
	\tilde{\Omega} \defeq
	\bmat{ \tilde{\U} & \tilde{\V} \\ \W & \tilde{\G} } \defeq
	\left[\begin{array}{c|cc}
		A & B_1 & \tilde{B}_2\\ \hlinet
		\tilde{C}_1 & 0 & D_{12} \\
		C_2 & D_{21} & 0 \end{array}\right],
\end{equation}
where $\tilde{B}_2 = \bmat{ B_{2_0} & \tilde{B}_{2_{\tau}} }$ is partitioned the same way as the original $B_2$. Finally, define the FIR systems
\[
	\bmat{\tilde{\Pi}_u\\ \tilde{\Pi}_b} \defeq
	\pi_{\tau}\!
	\left\{ \left[\begin{array}{cc|c}
		H_{11} & H_{12} & B_{2_{\tau}}\\
		H_{21} & H_{22} & -C_1^\tp D_{12_{\tau}}\\ \hlinet
		D_{12_0}^\tp{C}_1 & B_{2_0}^\tp & 0 \\
		C_2 & 0 & 0\end{array}\right]e^{-s\tau}     
	\right\}
\]
and $\Gamma$ outputs 
$\Pi_{u} \defeq {\left[\begin{array}{cc}I& \tilde{\Pi}_u\\
0&I\end{array}\right]}$ and $\Pi_b \defeq \left[\begin{array}{cc}0& \tilde{\Pi}_b \end{array}\right]$.

\section{Proof of Theorem~\ref{thm:3}}\label{App:A}
\begin{figure*}[htb]
	\centering
	\includegraphics{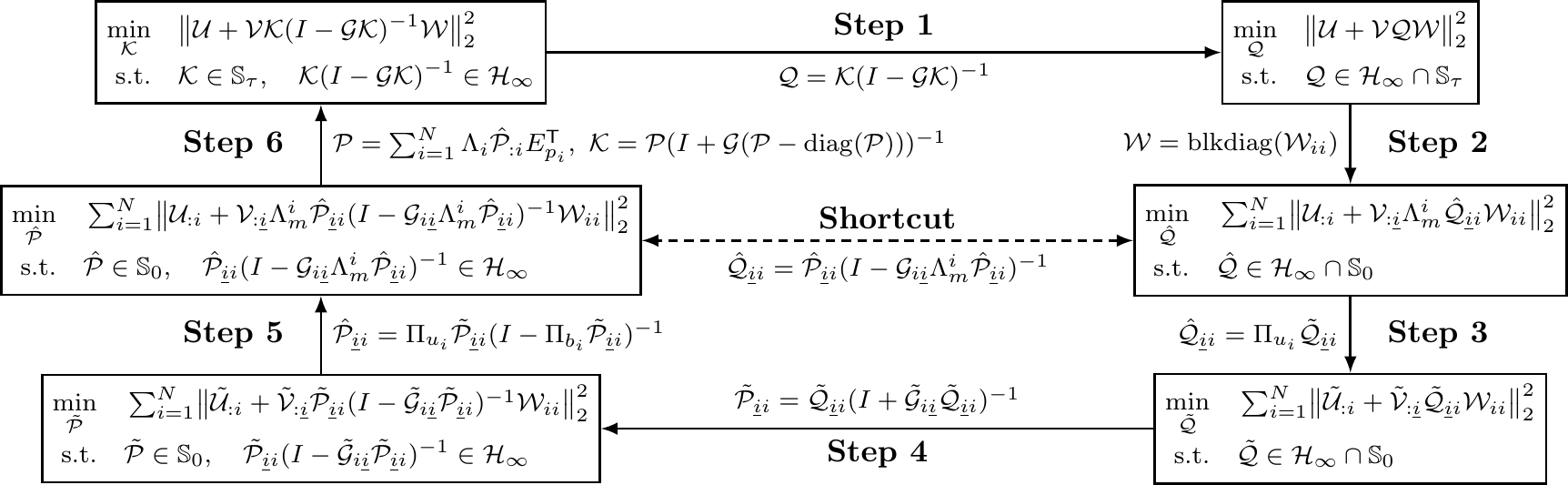}
	\caption{Sequence of transformations used in the proof of Theorem~\ref{thm:3}. The proof starts in the upper-left box, which is our original structured optimization problem~\eqref{opt} and follows the arrows clockwise.}
	\label{fig:proof} 
\end{figure*}
A roadmap for the proof is shown in Fig.~\ref{fig:proof}. We begin in the top left block, which is~\eqref{opt}, then we follow the arrows, each of which is explained in the following paragraphs.

\noindent\textbf{Step 1.} Assumption~\ref{Ass:System_hurwitz} together with quadratic invariance~\cite{rotkowitz2005characterization,rotkowitz2010convexity} ensures that all stabilizing controllers are parameterized by $\Q=\K(I-\G\K)^{-1}$, where the Youla parameter $\Q$ is stable and has the same structure as $\K$. This leads to a convex model-matching problem in $\Q$.

\vspace{1mm}
\noindent\textbf{Step 2.} Note that $\W$ is block-diagonal and $\Q_{\underline{i}i}=\Lambda_m^i\hat\Q_{\underline{i}i}$, where $\Lambda_m^i$ is an adobe delay matrix (defined in Section~\ref{sec:notation}), and $\hat Q$ is delay-free. Thus, we may separate the cost by its block columns and optimize each summand separately. A similar separation technique was leveraged in~\cite{kashyap2019explicit,kim2015explicit} to solve the non-delayed version of this problem.

\vspace{1mm}
\noindent\textbf{Step 3.} We now apply the loop-shifting result of Mirkin et al.~\cite[Thm.~1]{mirkin2012h2}, which we state below as Lemma~\ref{lem:1}. This result transforms an optimal control problem containing an adobe delay $\Lambda_m^i$ to one with no delays. The original controller can then be recovered via a transformation involving FIR blocks.

\begin{lem}[{\!\!\cite[Thm.~1]{mirkin2012h2}}]\label{lem:1}
    Consider the following structured optimal control problem subject to Assumption~\ref{Ass:System} similar to~\eqref{opt}, but with a simpler structure: there are two control inputs, and the second one is delayed by $\tau$. In other words, we would like to solve
    \begin{equation*}\label{opt_mirkin}
        \underset{\Lambda_m \K \; \stabilizes \; \Omega}{\minimize} \quad
        \normm{\,\U + \V\Lambda_m\K(I-\ \G\Lambda_m\K)^{-1}\W\,}_2^2
    \end{equation*}
    where $\Lambda_m \defeq \blkdiag(I, e^{-s\tau} I)$. 
    Let $(\tilde{\Omega},\Pi_u,\Pi_b)=\Gamma(\Omega,\Lambda_m)$, where $\Gamma$ is defined in Section~\ref{sec:notation} and Appendix~\ref{sec:appendix_gamma}.
    Then, $\Lambda_m\K$ stabilizes $\Omega$ if and only if  $\tilde{K}$ stabilizes $\tilde{\Omega}$, where $\K$ and $\tilde\K$ are related via the bijective transformation
    $
        \K = \Pi_{u}\tilde{\K}(I-\Pi_b\tilde{\K})^{-1}
    $.
\end{lem}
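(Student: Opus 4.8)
The plan is to treat this as a \emph{loop-shifting} argument: recast both the original delayed problem and the proposed reparametrization as linear-fractional transformations (LFTs) and show they compose to give the modified delay-free problem. Write the closed loop as the lower LFT $\mathcal{F}_\ell(\Omega,\Lambda_m\K)\defeq \U+\V\Lambda_m\K(I-\G\Lambda_m\K)^{-1}\W$, and observe that the proposed relation $\K=\Pi_u\tilde\K(I-\Pi_b\tilde\K)^{-1}$ is itself the LFT $\K=\mathcal{F}_\ell(\Pi,\tilde\K)$ with the FIR plant $\Pi\defeq\left[\begin{smallmatrix}0&\Pi_u\\ I&\Pi_b\end{smallmatrix}\right]$. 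Three things must then be established: (i) the map $\tilde\K\mapsto\K$ is a bijection on the relevant controller class; (ii) the closed-loop map from $w$ to $z$ is preserved, i.e. $\mathcal{F}_\ell(\Omega,\Lambda_m\K)=\mathcal{F}_\ell(\tilde\Omega,\tilde\K)$, which immediately yields equality of the $H_2$ costs and hence reduces the delayed problem to the delay-free one; and (iii) internal stability is preserved in both directions.

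For the bijection (i), I would exploit the block structure $\Pi_u=\left[\begin{smallmatrix}I&\tilde\Pi_u\\ 0&I\end{smallmatrix}\right]$ and $\Pi_b=\left[\begin{smallmatrix}0&\tilde\Pi_b\end{smallmatrix}\right]$. Since $\Pi_u$ is block upper-triangular with identity diagonal it is unimodular with an FIR (hence stable) inverse, and $\Pi_b$ feeds back only through the delayed channel. Applying the push-through identity to solve the LFT relation for $\tilde\K$ gives the explicit inverse $\tilde\K=\Pi_u^{-1}\K(I+\Pi_b\Pi_u^{-1}\K)^{-1}$; because every factor appearing here is FIR and bi-proper, the correspondence is a genuine bijection that also respects the sparsity and delay structure of the admissible controllers.

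The heart of the proof is (ii), which I would carry out using the LFT star-product composition rule: $\mathcal{F}_\ell(\Omega,\Lambda_m\mathcal{F}_\ell(\Pi,\tilde\K))=\mathcal{F}_\ell(\hat\Omega,\tilde\K)$, where $\hat\Omega$ is the combined plant obtained by folding $\Lambda_m$ and $\Pi$ into $\Omega$. The entire design of the $\Gamma$ map is engineered so that $\hat\Omega=\tilde\Omega$. Concretely, the delayed input channels $B_{2_\tau}$ and $D_{12_\tau}$ generate terms carrying the factor $e^{-s\tau}$ in the closed loop; the completion operator $\pi_\tau\{\cdot\}$ defining $\tilde\Pi_u,\tilde\Pi_b$ produces, for each such term, an LTI part together with \emph{exactly} the matching delayed part, so that the delayed contributions cancel and only a finite-dimensional LTI plant survives. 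Tracking the state-space realizations through this cancellation is what produces the stated formulas for $\tilde B_2$ and $\tilde C_1$ in terms of the Hamiltonian exponential blocks $\Sigma_{ij}$.

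I expect the main obstacle to be precisely this bookkeeping: verifying that $H$ and its exponential $\Sigma=e^{H\tau}$ are chosen so that the FIR completion terms cancel the delay-dependent cross terms and assemble into the claimed $\tilde B_2,\tilde C_1$. The key facts I would lean on are that $H$ is a Hamiltonian matrix, so $\Sigma$ is symplectic, $\Sigma^\tp J\Sigma=J$ with $J=\left[\begin{smallmatrix}0&I\\ -I&0\end{smallmatrix}\right]$; this both guarantees the invertibility of $\Sigma_{22}$ needed to form $\tilde C_1$ and supplies the identities among the $\Sigma_{ij}$ that make the cancellation exact. Finally, stability (iii) is comparatively routine: since $\Pi_u$, $\Pi_b$, and $\Pi_u^{-1}$ are all FIR and hence stable, neither the interconnection $\K=\Pi_u\tilde\K(I-\Pi_b\tilde\K)^{-1}$ nor its inverse introduces unstable hidden modes, so internal stability of the $(\Omega,\Lambda_m\K)$ loop is equivalent to that of the $(\tilde\Omega,\tilde\K)$ loop, which establishes the desired ``if and only if.''
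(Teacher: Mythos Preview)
The paper does not prove this lemma; it is quoted verbatim as \cite[Thm.~1]{mirkin2012h2} and invoked as a black box in Step~3 of the proof of Theorem~\ref{thm:3}. There is therefore no ``paper's own proof'' to compare against.

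That said, your sketch is a faithful outline of the loop-shifting argument that Mirkin et~al.\ actually use: rewriting both the controller reparametrization and the closed loop as lower LFTs, composing via the star product, and engineering $\Gamma$ so that the FIR completion terms from $\pi_\tau\{\cdot\}$ cancel the $e^{-s\tau}$ contributions, leaving the finite-dimensional $\tilde\Omega$. Two small caveats on your outline. First, symplecticity of $\Sigma$ by itself does not guarantee invertibility of $\Sigma_{22}$; in the original proof this follows from the Riccati assumptions (Assumption~\ref{Ass:System_riccati} here), which rule out conjugate points on $[0,\tau]$. Second, your remark that the bijection ``respects the sparsity and delay structure of the admissible controllers'' is extraneous in the setting of Lemma~\ref{lem:1}, which concerns only the adobe-delay structure with no additional sparsity; the sparsity decomposition happens earlier, in Step~2 of the paper's proof of Theorem~\ref{thm:3}, before Lemma~\ref{lem:1} is invoked on each column separately.
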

\noindent Transfer matrices $\tilde\U$, $\tilde\V$, $\W$ are rational with realizations given in~\eqref{eq:omegatilde}, so we have $N$ separate standard model-matching problems~\cite[\S14.5]{ZDG} in~$\tilde Q_{\underline ii}$, with solutions
\begin{align}\label{eq:Qtilde_sol}
	\tilde{\Q}_{\underline i i} = \left[\begin{array}{cc|c}%
A_{\underline{ii}}+\tilde{B}_{2_{\underline{ii}}}\tilde{F}^i & -\tilde{B}_{2_{\underline{ii}}}\tilde{F}^i&0\\
0&A_{\underline{ii}}+(LC)^i&E_{n_{\underline{i}}}^\tp E_{n_{{i}}}L^i\\\hlinet \tilde{F}^i&-\tilde{F}^i&0\end{array}\right]\!,
\end{align}
where $\tilde{F}^i$ and $L^i$ are defined in~(\ref{eq:Riccati}).

The rest of the proof consists of algebraic substitutions and simplifications to transform $\tilde\Q_{\underline i i}$ back into $\K$.

\vspace{1mm}
\noindent\textbf{Step 4.} Applying the inverse transformation from Step~1 to each sub-problem, we obtain $N$ separate control problems in the variables  $\tilde{\Pp}_{\underline{i}i}=\tilde{\Q}_{\underline{i}i}(I+\tilde{\G}_{i\underline{i}}\tilde{\Q}_{\underline{i}i})^{-1}$.

\vspace{1mm}
\noindent\textbf{Step 5.} Given the solutions $\tilde{\Pp}_{\underline ii}$, we can invert the transformation in Lemma~\ref{lem:1} to obtain the adobe-delayed controller $\hat{\Pp}_{\underline ii} = \Pi_{u_i}\tilde{\Pp}_{\underline{i}i}(I-\Pi_{b_i}\tilde{\Pp}_{\underline{i}i})^{-1}$.

\vspace{1mm}
\noindent\textbf{Step 6.} By comparing terms, we have (see shortcut arrow in Fig.~\ref{fig:proof}):
$\Q_{\underline ii} = \Lambda_m^i \hat\Q_{\underline ii} = \Lambda_m^i\hat{\Pp}_{\underline{i}i}(I-{\G}_{i\underline{i}}\Lambda_m^i\Pp_{\underline{i}i})^{-1}$.
Inverting this equation yields 
$\Lambda_m^i\hat{\Pp}_{\underline{i}i}=\Q_{\underline{i}i}(I+{\G}_{i\underline{i}}\Q_{\underline{i}i})^{-1}$. Now zero-pad and horizontally concatenate the $\Lambda_m^i\hat{\Pp}_{\underline{i}i}$ to obtain $\Pp=\sum_{i=1}^N \Lambda_{i} \hat{\Pp}_{:i}E_{p_i}^\tp=\Q(\diag(I+{\G}\Q))^{-1}$. Together with $\Q=\K(I-\G\K)^{-1}$ from Step~1, eliminate $\Q$ to obtain (see also \cite[Lemma~9]{kim2015explicit})
\begin{equation}\label{eq:KPlall}
	\K=\Pp(I+\G (\Pp-\diag( \Pp)))^{-1}.
\end{equation}
Carrying the $\tilde\Q_{\underline i i}$ found in~\eqref{eq:Qtilde_sol} through the transformations in steps 4--6, and simplifying, we obtain
\[
\Pp=-\bar{\1}_m^\tp\bar{\Lambda}\bar{\Pi}_u\tilde{F}(sI-\bar{A}-\bar{L}\bar{C}-\tilde{B}\tilde{F}+\bar{L}\bar{\Pi}_b\tilde{F})^{-1} \bar{L}\bar{\1}_p.
\]
Here, $\bar{\Pi}_u$, $\bar{\Pi}_b$, $\bar\Lambda$, $\tilde F$, $\tilde L$ are block-diagonal matrices (zero-padded if necessary) made from $\{\Pi_{u_i}\}$, $\{\Pi_{b_i}\}$, $\{\Lambda^i_m\}$, $\{\tilde F^i\}$, $\{L^i\}$, respectively, and $\bar{\1}_n\defeq 1_N\otimes I_n$, where $\otimes$ denotes the Kronecker product.
To compute $\K$ and extract subcontrollers $\K_i$, we follow similar steps to the non-delayed case~\cite[Thm.~4]{kashyap2019explicit}.
Substituting the expression for $\Pp$ into~\eqref{eq:KPlall} and further simplifying, we obtain
\begin{multline}\label{eq:Kfinal}
	\K=-\bar{\1}_m^\tp\bar{\Lambda}\bar{\Pi}_u\tilde{F}\bigl(sI-\bar{A}-\bar{L}\bar{C}-\tilde{B}\tilde{F}+\bar{L}\bar{\Pi}_b\tilde{F}\\
	-\bar{L}\bar{\1}_pC_2(sI-A)^{-1}B_2\bar{\1}_{\Delta m}^\tp\bar{\Lambda}\bar{\Pi}_u\tilde{F}\bigr)^{-1} \bar{L}\bar{\1}_p,
\end{multline}
where $\bar{\1}_{\Delta m} \defeq \bar{\1}_m -\diag(\bar{\1}_m)$. Define the controller state variable $\eta$ and input variable $u$ as follows:
\begin{subequations}
	\begin{align*}\notag
		\eta &= -\bigl(sI-\bar{A}-\bar{L}\bar{C}-\tilde{B}\tilde{F}+\bar{L}\bar{\Pi}_b\tilde{F}\\ 
		&\hspace{15mm}-\bar{L}\bar{\1}_pC_2(sI-A)^{-1}B_2\bar{\1}_{\Delta m}^\tp\bar{\Lambda}\bar{\Pi}_u\tilde{F}\bigr)^{-1} \bar{L}\bar{\1}_p y,\label{eta} \\
		u &= \bar{\1}_m^\tp\bar{\Lambda}\bar{\Pi}_u\tilde{F}\eta.
	\end{align*}
\end{subequations}
Further simplify using the identities $\bar{L}\bar{\1}_pC_2=\bar{L}\bar{C}\bar{\1}_n$ and $(sI-A)^{-1}B_2\bar{\1}_{\Delta m}^\tp\bar{\Lambda}=\bar{\1}_{\Delta n}^\tp\bar{\Lambda}(sI-\bar{A})^{-1}\bar{B}$ and split the state equations into their agent-level components $\eta_i$, and we obtain the reduced realization~\eqref{eq:state_agent_Ktau_reduced}, as required. \qedhere

	\bibliographystyle{abbrv}
	{\small\bibliography{optcont_cdc}}
\end{document}